\theoremstyle{plain}
\newtheorem{thm}{Theorem}[section]
\newtheorem{theorem}{Theorem}[section]
\newtheorem{lemma}[thm]{Lemma}
\theoremstyle{definition}
\newtheorem{defn}{Definition}[section]
\newtheorem{definition}{Definition}[section]
\newtheorem{remark}[defn]{Remark}
\theoremstyle{remark}
\numberwithin{equation}{section}
\numberwithin{figure}{section}
\DeclareMathOperator{\re}{Re}
\def\I{\mathrm{i}}
\def\R{{\mathbb R}}
\def\C{{\mathbb C}}
\begin{document}


\title{Hadamard's variational formula and the energy-momentum tensor}

\author{Bj\"orn Gustafsson\textsuperscript{1}, Ahmed Sebbar\textsuperscript{2}\textsuperscript{3}}

\date{\today}


\maketitle



\begin{abstract} 
The Hadamard variational formula for the Green function is formulated in terms of a polarized energy-momentum tensor and a strain tensor.
This is elaborated in a general setting of subdomains of a Riemannian manifold in arbitrary dimension and linked to the way the energy-momentum
tensor in general field theory appears as a result of varying the metric tensor in a Lagrangian function.
\end{abstract}

\noindent {\it Keywords:} Green function, Hadamard formula, energy-momentum tensor,  Maxwell tensor.

\noindent {\it MSC:} 31C12, 53A45, 70S05.

 \footnotetext[1]
{Department of Mathematics, KTH, 10044, Stockholm, Sweden.\\
Email: \tt{gbjorn@kth.se}}
 \footnotetext[2]
{Department of Mathematics, Chapman University,  Orange, California 92866.\\
Email: \tt{sebbar@chapman.edu}}
\footnotetext[3]
{Universit\'{e} Bordeaux, IMB, UMR 5251, F-33405 Talence, France.\\
Email: \tt{ahmed.sebbar@math.u-bordeaux.fr}}

\noindent {\it Acknowledgements:} 
The first author expresses warm thanks for a generous invitation to Chapman University, where this work was initiated, 
and for the creative and friendly atmosphere which he experienced during the stay.


\section{Introduction}\label{sec:introduction}

The Hadamard variational formula expresses how the Green function for a domain changes under an infinitesimal
variation of the boundary of the domain. It is usually formulated in terms of a boundary integral, like in (\ref{traditional}) below.
However, in his book \cite{Garabedian-1964}, Paul Garabedian formulated the principle instead in terms of an area integral (in two dimensions)
containing a generalization of the Maxwell stress tensor, which is an energy-momentum tensor for the electromagnetic field (see \cite{Landau-Lifshitz-1962}). 
The present paper grew out from attempts to understand Garabedian's point of view from a more general perspective.

We elaborate the subject in a general setting of subdomains of a Riemannian manifold of arbitrary dimension using tools of differential geometry
and tensor analysis. The Lie derivative plays a crucial role, even when re-deriving
the classical boundary integral formulation of Hadamard's principle in Section~\ref{traditional}.

The main result, Theorem~\ref{thm:Hadamard} in Section~\ref{sec:interior version}, expresses the Hadamard principle in terms of a volume integral
containing the energy-momentum tensor and a strain tensor. 
The Green function takes the role of representing, as a potential, the physical field in the energy-momentum tensor. 
This tensor is quadratic in the field, and in our case it is actually
polarized, with two different Green functions. The strain tensor contains the information of how the imposed vector field deforms the domain, and thereby
also deforms the metric tensor of the domain.

All this makes the treatment accord with general principles of physics, which are briefly discussed in the final Section~\ref{sec:physics}, partly in terms of an example
from \cite{Hawking-Ellis-1973}.

The present paper can be viewed as a continuation of our understanding of the Green function started in \cite{Gustafsson-Sebbar-2012}.


\section{Traditional case, using the Lie derivative}\label{sec:traditional}

The (Laplacian) Green function $G_a$ for a (bounded) domain $\Omega\subset \R^n$ is defined by the properties
\begin{align*}
-\Delta G_a &=\delta_a\quad\text {in \,\,}\Omega, \\
G_a&=0 \quad\text {on \,\,}\partial \Omega.
\end{align*}
Writing $G(x,a)=G_a(x)$, $G(x,a)$ is symmetric with respect to $x$ and $a$. This is most clearly seen by using standard Green's formulas to express
the Green function as a mutual energy:
\begin{equation}\label{Gab0}
G(a,b)=\int_\Omega (\nabla G_a\cdot \nabla G_b)\,dx.
\end{equation}

The Green function certainly depends on the domain, $G=G_\Omega$, and Hadamard's classical formula \cite{Garabedian-1964}
expresses how $G_\Omega(x,a)$
changes under small (infinitesimal) variations of the boundary (there also exist formulas relating to various kinds of interior variations). 
Here we shall start by reviewing this formula (namely equation (\ref{traditional}) below) using the language of differential forms
and Lie derivatives. 

So let a smoothly bounded domain $\Omega= \Omega(t)\subset \R^n$ move in the flow of a vector field ${\bf v}=\sum_{j=1}^n v^j \frac{\partial}{\partial x^j}$, 
and denote by $\mathcal{L}_{\bf v}$ the Lie derivative, and by $i({\bf v})$ interior derivation (``contraction'' ), with respect to ${\bf v}$.
See in general Frankel \cite{Frankel-2012} for differential geometric concepts and notations.  
That $\Omega(t)$ moves in flow of ${\bf v}$ means effectively just that the boundary $\partial\Omega$ 
moves with speed, as measured in the normal direction, equal to the normal component of ${\bf v}$ on $\partial\Omega$.

One basic property of the Lie derivative is that if we integrate
a differential $p$-form $\omega$ over a $p$-chain $\gamma(t)$ which moves in the flow of a vector field $\bf v$ 
($t$ being the corresponding time parameter), then
\begin{equation}\label{Lomega}
\frac{d}{dt}\int_{\gamma(t)} \omega =\int_{\gamma(t)} \mathcal{L}_{\bf v}\omega.
\end{equation}
In case $\omega$ itself depends on $t$ there will be an additional term $\int_{\gamma(t)}\frac{\partial \omega}{\partial t}$. Also the vector field ${\bf v}$
may depend on $t$, but that causes no changes in the formula.  We shall need (\ref{Lomega}) only in the case $p=n$ and $\omega(t)=\Omega(t)$. 

Now, in the language of differential forms the representation (\ref{Gab0}) 
takes the form
\begin{equation}\label{Gab}
G(a,b)=\int_{\Omega} dG (\cdot, a)\wedge * dG(\cdot, b),
\end{equation}
where the star is the Hodge star.
From this we have, with $a,b\in \Omega$ $(a\ne b)$ kept fixed,
\begin{equation}\label{HadamardL}
\frac{d}{dt}G_{\Omega(t)} (a,b) =\int_{\Omega(t)} \mathcal{L}_{\bf v}(dG_\Omega (\cdot, a)\wedge * dG_\Omega(\cdot, b)),
\end{equation}
as an immediate consequence of (\ref{Lomega}).
The right member of (\ref{HadamardL}) can be made more explicit by the computation (suppressing the $t$ in $\Omega(t)$)
$$
\int_{\Omega} \mathcal{L}_{\bf v}(dG_a \wedge * dG_b)=\int_\Omega(d\circ i({\bf v})+i({\bf v})\circ d)(dG_a \wedge * dG_b)=
$$
$$
=\int_\Omega d(i({\bf v})(dG_a \wedge * dG_b))=\int_{\partial\Omega} i({\bf v})(dG_a \wedge * dG_b)=
$$
$$
=\int_{\partial\Omega} (i({\bf v})dG_a) \wedge * dG_b-\int_{\partial\Omega} dG_a \wedge i({\bf v})(*dG_b).
$$
Here the second term disappears since $dG_a=0$ along $\partial\Omega$. For the first term we have 
$$
i({\bf v})dG_a=\frac{\partial G_a}{\partial n} \,v_n \quad \text{on\,\,}\partial\Omega,
$$
$$
*dG_b=\frac{\partial G_b}{\partial n} \,d\sigma \quad \text{along\,\,}\partial\Omega,
$$
in terms of (outward) normal derivatives and components and with $d\sigma$ denoting the surface area element on $\partial \Omega$. Thus 
$$
\int_{\Omega} \mathcal{L}_{\bf v}(dG_a \wedge * dG_b)
=\int_{\partial\Omega}  \frac{\partial G_a}{\partial n} \frac{\partial G_b}{\partial n} \,v_n d\sigma,
$$
to be inserted in (\ref{HadamardL}).

In traditional notation, with
$$
\quad v_n=\frac{\delta n}{\delta t}, \quad \frac{d}{dt} G_{\Omega(t)}(a,b) =\frac{\delta}{\delta t} G(a,b),
$$
the variational formula takes the well-known form
\begin{equation}\label{traditional}
\delta G(a,b) =\int_{\partial\Omega}  \frac{\partial G(\cdot,a)}{\partial n} \frac{\partial G(\cdot,b)}{\partial n} \,\delta n \,d\sigma.
\end{equation}
The derivation remains valid in the Riemannian manifold setting to be discussed in the next section.

\begin{remark}
A case of special interest arises when the vector field ${\bf v}$ itself is generated by the Green function, with pole at a third point $c\in\Omega$,
namely when ${\bf v}=\nabla G(\cdot,c)$. In order for ${\bf v}$ to exist in a full neighborhood of $\partial{\Omega}$, that boundary 
need to be analytic. The formula (\ref{traditional}) becomes in this case
$$
\delta G(a,b) =\int_{\partial\Omega}  \frac{\partial G(\cdot,a)}{\partial n} \frac{\partial G(\cdot,b)}{\partial n}\frac{\partial G(\cdot,c)}{\partial n} \,\delta n \,d\sigma,
$$
with a triple symmetry. This beautiful formula appeared in \cite{Wiegmann-Zabrodin-2000} expressing a kind of integrability of the Dirichlet problem in relation
to Laplacian growth. See \cite{Gustafsson-Teodorescu-Vasiliev-2014} in this respect. 
\end{remark}


\section{Interior version by  energy-momentum tensor}\label{sec:interior version}

In the computation in Section~\ref{sec:traditional} we had an integral over $\Omega$ involving a Lie derivative, and this integral was pushed to the boundary.
But there is also the possibility not to go to the boundary. Then also the vector field ${\bf v}$ will be differentiated, and one may arrange matters so that
the derivatives of ${\bf v}$ appear only in a certain {\it strain tensor} $D$  (see below). 
The remaining factor will be a polarized {\it energy-momentum tensor} $T=T(a,b)$
(see in general \cite{Hawking-Ellis-1973, Frankel-2012}), or {\it Maxwell stress tensor} in the terminology of Garabedian \cite{Garabedian-1964}.

We shall work in the more general setting of an oriented Riemannian manifold $M$ with metric
$$
ds^2=g_{ij}(x) dx^i\otimes dx^j,
$$
and a relatively compact domain $\Omega\subset M$ with smooth boundary. Here and in the sequel the Einstein summation convention applies: any index 
which occurs once up and once down in a term is summed over, from $1$ to $n$. The volume form for the metric is 
$$
{\rm vol}^n= \sqrt{g}\,dx^1\wedge\dots\wedge dx^n,\quad \text{where \,}
g=\det (g_{ij}).
$$
The Green function $G_a(x)=G(x,a)=G_\Omega(x,a)$ of $\Omega$ is defined by
\begin{align*}
-d*d G_a &= \delta_a{\rm vol}^n \quad\text{in\,\,}\Omega,\\ G_a&=0  \quad\qquad\text{on\,\,} \partial \Omega.
\end{align*}
Here $\delta_a{\rm vol}^n$ is the unit point mass at $a\in\Omega$ considered as an $n$-form (an $n$-form current, more precisely).
The first equation can equivalently be written
$$
-\Delta G_a ={\delta}_a,
$$
where $\Delta$ is the Hodge Laplacian, here taking functions ($0$-forms) into functions. 

Let $D$ be the symmetric covariant tensor defined by
\begin{equation}\label{DLg}
2D=2D_{ij}(x)\,dx^i\otimes dx^j=\mathcal{L}_{\bf v}(g_{ij}dx^i\otimes dx^j).
\end{equation}
In elasticity theory $D$ is the strain tensor, measuring the deformation caused by ${\bf v}$. If there is no deformation, i.e. if $D=0$,
then ${\bf v}$ is called a Killing vector field. The components of $D$ are given by
$$
2D_{ij}=g_{ik}v^k_{;j}+g_{kj} v^k_{;i}=v_{i;j}+v_{j;i}\, 
$$
where semicolon refers to covariant differentiation, as is traditional. (Notational remark: Frankel \cite{Frankel-2012} uses a slash, in place of semicolon,
to denote covariant differentiation.) 

Next we define a symmetric tensor $T=T(a,b)=T(x;a,b)$, a polarized {energy-momentum} tensor with respect to the variable $x\in \Omega$,
depending on $a,b\in \Omega$ as parameters.  As a covariant tensor it is $T=T_{ij}dx^i\otimes dx^j$, where
$$
T_{ij}=\frac{\partial G(x,a)}{\partial x^i}\frac{\partial G(x,b)}{\partial x^j}+\frac{\partial G(x,a)}{\partial x^j}\frac{\partial G(x,b)}{\partial x^i}
-g_{ij}g^{k\ell}\,\frac{\partial G(x,a)}{\partial x^k}\frac{\partial G(x,b)}{\partial x^\ell}.
$$
This definition differs from what is common in physics by a factor two, but the above will be convenient for us.

To simplify notation we set
\begin{equation}\label{alphabetaPhi}
\alpha=dG(\cdot,a),\quad \beta =dG(\cdot,b), \quad \Phi =\alpha_k \beta^k.
\end{equation}
Then $\alpha$ and $\beta$ are $1$-forms, or covariant vector fields, 
while $\Phi$ is a scalar field which can be thought of as a mutual energy  density between $\alpha$ and $\beta$. In terms of components, 
$$
\alpha=\alpha_jdx^j, \quad \beta=\beta_j dx^j,
$$
where
\begin{equation}\label{Galpha}
\alpha_j=\frac{\partial G(x,a)}{\partial x^j}, \quad \beta_j=\frac{\partial G(x,b)}{\partial x^j}.
\end{equation}

We shall allow to freely raise and lower indices by means of the metric tensor. For example,
$$
\beta^j=\beta_i g^{ij},\quad g_{ij} g^{jk} = g_i^k=\delta_i^k \quad (\text{Kronecker delta}).
$$
Then we have 
\begin{equation}\label{TABn}
T_{ij} =\alpha_{i} \beta_{j} +\alpha_{j}\beta_{i}-g_{ij} \alpha_{k}\beta^{k}=\alpha_{i} \beta_{j} +\alpha_{j}\beta_{i}-\Phi\,g_{ij}.
\end{equation}
The trace of $T$ is
\begin{equation}\label{trace}
{\rm tr\,}T=T_{ij}g^{ij}= (2-n) \Phi.
\end{equation}
In the above notation, 
\begin{equation}\label{Gab1}
G(a,b)=\int_\Omega \alpha \wedge * \beta =\int_\Omega \Phi \,{\rm vol}^n.
\end{equation}

The contravariant version of the energy-momentum tensor has components
$$
T^{ij}=T_{rs}\,g^{ri}g^{s j}=\alpha^i \beta^j +\alpha^j \beta^i -\Phi\,g^{ij}.
$$
Using that all covariant derivatives of the metric tensor vanish this gives 
$$
T^{ij}_{;j}=\alpha^i_{;j} \beta^{j} +\alpha^i \beta^j_{;j} + \alpha^j_{;j} \beta^{i} +\alpha^{j} \beta^i_{;j}-\Phi_{;j}\,g^{ij}.
$$
By (\ref{Galpha}), 
$$
\alpha_{i;j}=\alpha_{j;i}, \quad \alpha_{i;j}=\alpha_{j;i}
$$
$$
\alpha^j_{;j} =-{\delta}_a, \quad \beta^j_{;j} =-{\delta}_b.
$$
It follows that $T^{ij}_{;j}=- \alpha^i {\delta}_b- \beta^i {\delta}_a$. Setting 
\begin{equation}\label{mu}
\mu^i =- \alpha^i {\delta}_b- \beta^i {\delta}_a=- \alpha^i (b){\delta}_b- \beta^i (a){\delta}_a
\end{equation}
for this source term we thus have:

\begin{lemma}\label{lem:divT}
The divergence of the energy-momentum tensor vanishes except for the two point source field $\mu=\mu (a,b)$ given by (\ref{mu}).
In terms of components,
\begin{equation}\label{divT}
T^{ij}_{;j}= \mu^i.
\end{equation}\end{lemma}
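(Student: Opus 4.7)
The identity (\ref{divT}) follows by a direct covariant computation, essentially the one already sketched in the paragraphs preceding the lemma; my plan is to organize it into a self-contained argument with the cancellations made explicit.

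Starting from $T^{ij} = \alpha^i\beta^j + \alpha^j\beta^i - \Phi g^{ij}$, I would apply the Leibniz rule for the covariant divergence and use metric compatibility $g^{ij}_{;j} = 0$ to obtain
\[
T^{ij}_{;j} = \alpha^i_{;j}\beta^j + \alpha^i\beta^j_{;j} + \alpha^j_{;j}\beta^i + \alpha^j\beta^i_{;j} - \Phi^{;i}.
\]
The key step is to expand $\Phi^{;i}$, with $\Phi = \alpha_k\beta^k$, as $\Phi^{;i} = g^{ij}(\alpha_{k;j}\beta^k + \alpha_k\beta^k_{;j})$, and then exploit that $\alpha = dG_a$ and $\beta = dG_b$ are exact, hence closed, so that their covariant derivatives are symmetric: $\alpha_{i;j} = \alpha_{j;i}$ and $\beta_{i;j} = \beta_{j;i}$ (this is just the symmetry of the Hessian of a scalar with respect to the Levi-Civita connection). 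Combining these symmetries with metric compatibility (raising and lowering of indices commutes with ``$;$''), the two pieces of $\Phi^{;i}$ become exactly $\alpha^i_{;j}\beta^j$ and $\alpha^j\beta^i_{;j}$, which cancel two of the four Leibniz terms and leave $T^{ij}_{;j} = \alpha^i\beta^j_{;j} + \alpha^j_{;j}\beta^i$.

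It then remains to identify the surviving divergences. Since $\alpha^j = g^{jk}\partial_k G_a$, the quantity $\alpha^j_{;j}$ is the Laplace--Beltrami operator applied to $G_a$, which by the defining equation $-\Delta G_a = \delta_a$ equals $-\delta_a$; similarly $\beta^j_{;j} = -\delta_b$. Because $a \neq b$, $\alpha^i$ is smooth at $b$ and $\beta^i$ is smooth at $a$, so the distributional products $\alpha^i \delta_b$ and $\beta^i \delta_a$ are unambiguous and equal $\alpha^i(b)\delta_b$ and $\beta^i(a)\delta_a$, respectively. Assembling the pieces yields $T^{ij}_{;j} = -\alpha^i(b)\delta_b - \beta^i(a)\delta_a = \mu^i$.

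I do not expect any real obstacle here; the argument is routine index gymnastics once the right symmetries are brought to bear. The only subtlety is the order of operations in the key cancellation: one must apply the symmetries $\alpha_{i;j}=\alpha_{j;i}$ coming from closedness \emph{before} raising indices through $g^{ij}$, so that the two summands comprising $\Phi^{;i}$ line up with, and precisely cancel against, the correct pair among the four Leibniz terms. Everything else is dictated by the defining PDE of the Green function and the trivial interaction of smooth tensor fields with point distributions away from their singular loci.
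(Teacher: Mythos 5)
Your proposal is correct and follows essentially the same route as the paper: Leibniz rule plus metric compatibility, cancellation of the $\Phi^{;i}$ terms against two of the four product terms via the symmetry of the Hessians of $G_a$ and $G_b$, and identification of $\alpha^j_{;j}=-\delta_a$, $\beta^j_{;j}=-\delta_b$ from the defining PDE. Your explicit remarks on the order of index-raising in the cancellation and on the well-definedness of the products $\alpha^i\delta_b$, $\beta^i\delta_a$ for $a\ne b$ are welcome clarifications of details the paper leaves implicit.
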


\begin{theorem}\label{thm:Hadamard}
The variation of the Green function $G_\Omega(a,b)$ due to a deformation of $\Omega\subset M$ driven by a smooth vector field ${\bf v}$
is, in terms of the energy-momentum tensor $T^{ij}=T^{ij}(a,b)$ and the strain tensor $D_{ij}$, given by
\begin{equation}\label{GTD}
\frac{d}{dt} G_{\Omega(t)} (a,b)=\int_{\Omega} T^{ij} D_{ij} \,{\rm vol}^n -v_i\mu^i =\int_\Omega(T^{ij}v_i)_{;j}\,{\rm vol}^n
\end{equation}
\end{theorem}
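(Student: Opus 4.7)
The strategy is to prove the two equalities in (\ref{GTD}) separately, using the middle expression as a bridge between the pure divergence on the right and the classical boundary form of Hadamard's formula.

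The second, algebraic equality follows from the Leibniz rule combined with Lemma~\ref{lem:divT}. Expanding
\[
(T^{ij}v_i)_{;j}=T^{ij}_{;j}\,v_i+T^{ij}\,v_{i;j}=\mu^i v_i+T^{ij}v_{i;j},
\]
and using the symmetry $T^{ij}=T^{ji}$ inherited from (\ref{TABn}), only the symmetric part $D_{ij}=\tfrac12(v_{i;j}+v_{j;i})$ of $\nabla{\bf v}$ survives the contraction, yielding the distributional identity $(T^{ij}v_i)_{;j}=T^{ij}D_{ij}+v_i\mu^i$. Integrating against ${\rm vol}^n$ then produces the second equality, with the term $v_i\mu^i$ interpreted as the corresponding distributional pairing against the point sources.

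For the first equality I would identify $\int_\Omega(T^{ij}v_i)_{;j}\,{\rm vol}^n$ with the classical Hadamard boundary integral (\ref{traditional}), whose derivation in Section~\ref{sec:traditional} carries over unchanged to the Riemannian setting. The divergence theorem rewrites the volume integral as $\int_{\partial\Omega}T^{ij}v_i\nu_j\,d\sigma$, where $\nu$ denotes the outward unit conormal. Since $G_a=G_b=0$ on $\partial\Omega$, the forms $\alpha=dG_a$ and $\beta=dG_b$ are purely conormal there: $\alpha_i=(\partial_nG_a)\nu_i$, $\beta_i=(\partial_nG_b)\nu_i$, and $\Phi=(\partial_nG_a)(\partial_nG_b)$. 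Substituting into $T_{ij}=\alpha_i\beta_j+\alpha_j\beta_i-\Phi g_{ij}$, the ``$-\Phi g_{ij}$'' contribution cancels one of the two ``$\alpha\beta$'' cross-terms, leaving
\[
T^{ij}v_i\nu_j\big|_{\partial\Omega}=(\partial_nG_a)(\partial_nG_b)\,v_n,
\]
precisely the integrand of (\ref{traditional}).

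The principal technical obstacle is that $T^{ij}v_i$ has $|x-a|^{-(n-1)}$ and $|x-b|^{-(n-1)}$ singularities at the two source points, so the divergence theorem does not apply directly on all of $\Omega$. I would handle this by excision: apply the theorem on $\Omega_\varepsilon:=\Omega\setminus(\overline{B_\varepsilon(a)}\cup\overline{B_\varepsilon(b)})$, where the vector field is smooth, and let $\varepsilon\to 0$. The volume integral $\int_{\Omega_\varepsilon}T^{ij}D_{ij}\,{\rm vol}^n$ converges because the integrand is $O(|x-a|^{-(n-1)})$ and hence locally integrable, while the boundary contributions on the small geodesic spheres $\partial B_\varepsilon(a)$ and $\partial B_\varepsilon(b)$ converge to the point-source values encoded by $\mu^i$---the standard residue computation that reflects the defining property $-\Delta G_a=\delta_a$. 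Careful bookkeeping of signs through this limiting procedure is what reconciles all three sides of (\ref{GTD}).
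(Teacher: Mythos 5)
Your proposal is correct, and its second half (the algebraic identity $(T^{ij}v_i)_{;j}=T^{ij}D_{ij}+v_i\mu^i$ via Lemma~\ref{lem:divT} and the symmetry of $T$) coincides with the paper's argument. Where you genuinely diverge is in the first equality. The paper never re-invokes the boundary formula (\ref{traditional}); instead it computes $\frac{d}{dt}G_{\Omega(t)}(a,b)=\int_\Omega(\Phi v^j)_{;j}\,{\rm vol}^n$ directly from the Lie derivative of $\Phi\,{\rm vol}^n$, and then proves the purely interior identity $\int_\Omega(T^{ij}v_i)_{;j}\,{\rm vol}^n=\int_\Omega(\Phi v^j)_{;j}\,{\rm vol}^n$ by writing the difference of the two divergences as boundary integrals of the form $\int_{\partial\Omega}\beta^j(\alpha_in_j-\alpha_jn_i)v^i\,d\sigma$, which vanish because $dG_a$, $dG_b$ and $du$ are proportional covectors on $\partial\Omega$. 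You instead push $\int_\Omega(T^{ij}v_i)_{;j}\,{\rm vol}^n$ entirely to the boundary, evaluate $T^{ij}v_i\nu_j=(\partial_nG_a)(\partial_nG_b)v_n$ there (your cancellation of the $-\Phi g_{ij}$ term against one cross-term is right: with $\alpha_i=A\nu_i$, $\beta_i=B\nu_i$ one gets $T_{ij}=AB(2\nu_i\nu_j-g_{ij})$, hence $T^{ij}v_i\nu_j=ABv_n$), and then quote the already-derived formula (\ref{traditional}). The two routes use the same geometric fact — conormality of $dG_a$, $dG_b$ on $\partial\Omega$ — but yours makes Theorem~\ref{thm:Hadamard} logically dependent on Section~\ref{sec:traditional}, whereas the paper's proof is self-contained and exhibits the interior identity (\ref{to prove}) explicitly. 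One point in your favour: your excision of small balls around $a$ and $b$ before applying the divergence theorem, with the small-sphere contributions producing the $-v_i\mu^i$ term, is more careful than the paper, which applies (\ref{intA}) to the singular field $A^i=T^{ij}v_j$ and treats $T^{ij}_{;j}=\mu^i$ distributionally without comment.
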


\begin{proof}
Using (\ref{Gab1}) we have
$$
\frac{d}{dt} G_{\Omega(t)} (a,b)=\int_\Omega\mathcal{L}_{\bf v}(\alpha\wedge *\beta) =\int_\Omega\mathcal{L}_{\bf v}(\Phi\, {\rm vol}^n)=
$$
$$
=\int_\Omega\mathcal{L}_{\bf v}(\Phi)\, {\rm vol}^n +\int_\Omega\Phi \,\mathcal{L}_{\bf v}({\rm vol}^n)=
$$
\begin{equation}\label{Phiv}
=\int_\Omega (\Phi_{,j} v^j +\Phi\, v_{;j}^j) \,{\rm vol}^n=\int_\Omega (\Phi \,v^j )_{;j} \,{\rm vol}^n. 
\end{equation}
Here we have used that $\mathcal{L}_{\bf v} (A\otimes B)=\mathcal{L}_{\bf v}(A)\otimes B+ A\otimes \mathcal{L}_{\bf v}(B)$ for arbitrary 
tensors $A$ and $B$ and that
$$
\mathcal{L}_{\bf v}({\rm vol})^n = d(i({\bf v}) {\rm vol}^n)=({\rm div}\,{\bf v})\, {\rm vol}^n= v_{;j}^j \,{\rm vol}^n
$$
(see \cite{Frankel-2012, Hawking-Ellis-1973} in general).
For an arbitrary vector field ${\bf A}=A^j\frac{\partial}{\partial x^j}$ we have, by Stokes' formula,
\begin{equation}\label{intA}
\int_\Omega A^j_{;j}{\rm vol}^n=\int_\Omega d( i({\bf A}) {\rm vol}^n)=\int_{\partial \Omega}i({\bf A}) {\rm vol}^n =\int_{\partial \Omega} A^j \,n_jd\sigma,
\end{equation}
where the oriented surface area form $n_jd\sigma$ is defined by the last equality. With $A^i=T^{ij}v_j$ this gives
$$
\int_{\partial\Omega} T^{ij}v_i\,n_jd\sigma= \int_\Omega (T^{ij}v_i)_{;j} {\rm vol}^n.
$$

Using the symmetries of $T$ and $D$ together with (\ref{mu}) and (\ref{divT}) the above identity can be continued as
$$
\int_\Omega(T^{ij}v_i)_{;j}\,{\rm vol}^n=\int_\Omega(T^{ij}_{;j}v_i+ T^{ij}v_{i;j})\,{\rm vol}^n
=-v_i\mu^i+\int_{\Omega} T^{ij} D_{ij} \,{\rm vol}^n.
$$
From this, and (\ref{Phiv}), we see that what remains to be proved is that
\begin{equation}\label{to prove}
\int_\Omega(T^{ij}v_i)_{;j}\,{\rm vol}^n=\int_\Omega (\alpha_k\beta^k\, v^j)_{;j}\,{\rm vol}^n.
\end{equation}

For the verification of (\ref{to prove}) we start with the left member and use (\ref{intA}) to turn some terms into boundary integrals:
$$
\int_{\Omega} (T^{ij} \,v_{i})_{;j} \,{\rm vol}^n=\int_\Omega\big((\alpha^i\beta^j+\alpha^j\beta^i -\alpha_k\beta^k g^{ij})\,v_{i}\big)_{;j} \,{\rm vol}^n=
$$
$$
=\int_\Omega\big((\alpha^i\beta^j+\alpha^j\beta^i -\alpha_k\beta^k g^{ij} -\alpha^k\beta_k g^{ij})\,v_{i}\big)_{;j} \,{\rm vol}^n+
$$
$$
+\int_\Omega (\alpha^k\beta_k\, g^{ij}v_i)_{;j}\,{\rm vol}^n=
$$
$$
=\int_\Omega\big((\alpha^i\beta^j-\alpha_k\beta^k g^{ij} )v_i+(\alpha^j\beta^i -\alpha^k\beta_k g^{ij})\,v_{i}\big)_{;j} \,{\rm vol}^n+
$$
$$
+\int_\Omega (\Phi\, v^j)_{;j}\,{\rm vol}^n=
$$
$$
=\int_{\partial\Omega}\big((\alpha^i\beta^j-\alpha_k\beta^k g^{ij} )v_i+(\alpha^j\beta^i -\alpha^k\beta_k g^{ij})\,v_{i}\big) \,n_jd\sigma+
$$
$$
+\int_\Omega (\Phi\, v^j)_{;j}\,{\rm vol}^n=
$$
\begin{equation}\label{boundary integral}
=\int_{\partial\Omega}\beta^j(\alpha_i n_j-\alpha_j n_i )\,v^id\sigma +\int_{\partial\Omega}\alpha^j(\beta_in_j -\beta_j n_i)\,v^{i}d\sigma+
\end{equation}
$$
+\int_\Omega (\Phi\, v^j)_{;j}\,{\rm vol}^n.
$$

Let $u$ be any defining function for $\partial\Omega$, i.e. any smooth function satisfying $u=0$ on $\partial\Omega$ and having nonzero gradient there.
In directions along the boundary $\partial\Omega$ we then have 
\begin{equation}\label{dGdGdu}
dG_a=0,\quad dG_b=0, \quad du=0, 
\end{equation}We can normalize $u$ so that  $|\nabla u|=1$ on $\partial\Omega$ and then $du=n_i dx^i$ where $n_i=\partial u/\partial x^i$ are the normal components in (\ref{boundary integral}).
Since $dG_a=\alpha_i dx^i$, $dG_b=\beta_i dx^i$ and since the covectors  $dG_a$, $dG_b$, $du$ by (\ref{dGdGdu}) are necessarily proportional at each point of $\partial\Omega$, it follows that
$$
\alpha_in_j=\alpha_jn_i, \quad \beta_i n_j=\beta_j n_i \quad \text{for all }\,\,i, j.
$$
Therefore the boundary integrals in (\ref{boundary integral}) disappear and we end up with
$$
\int_\Omega (T^{ij}v_i)_{;j}\,{\rm vol}^n=\int_\Omega (\Phi\, v^j)_{;j}\,{\rm vol}^n,
$$
which is (\ref{to prove}), as desired.
\end{proof}


\section{Connections to physics}\label{sec:physics}

In physics the energy $2$-form (here mutual energy) $\alpha\wedge *\beta$ is related to a Lagrangian density $L$ by 
$$
L\,{\rm vol}^n=\alpha\wedge *\beta,
$$
up to a sign and on disregarding source terms and mass terms. Thus $L=\Phi$, referring to (\ref{alphabetaPhi}). Theorem~\ref{thm:Hadamard} mainly expresses, in this language, that
\begin{equation}\label{LLTD}
\int_\Omega\mathcal{L}_{\bf v}(L\,{\rm vol}^n)=\int_{\Omega} T^{ij} \,D_{ij} \,{\rm vol}^n.
\end{equation}
This is in line with the general philosophy in physics that the energy-momen\-tum tensor arises from the Lagrangian as a result of an infinitesimal variation of the metric tensor.
See \cite{Hawking-Ellis-1973, Deligne-Freed-1999} for example. Recall also (\ref{DLg}) in this respect. 

In our case we started in Sections~\ref{sec:traditional} and \ref{sec:interior version} with a fixed manifold $M$ and let a subdomain $\Omega$ deform by moving in the flow of
a vector field ${\bf v}$. This means that $\Omega$, as a body, consists of the same material particles all the time. In that sense $\Omega$ is a fixed space (or body),
and what really changes is the metric tensor and the coordinate values for the individual particles. Thus we are within the realm of the above mentioned philosophy.
In physics books the formula (\ref{LLTD}) may look like
$$
\delta S=\int T^{ij }\, \delta g_{ij},
$$
where
$$
S=\int L\,{\rm vol}^n
$$ 
is the ``action''.

To set our discussions in a specific physical context we give an example of a Lagrangian function for scalar field $\psi$.
Citing from Hawking-Ellis \cite{Hawking-Ellis-1973} (Example~1 in Section~3.3  there), the Lagrangian for a field $\psi$ representing a $\pi^0$-meson is
$$
L=-\psi_{;i}\psi_{;j}g^{ij}-\frac{m^2}{\hbar^2}\psi^2.
$$
The Euler-Lagrange equation obtained by variation of the action is
$$
\psi_{;ij}g^{ij}-\frac{m^2}{\hbar^2}\psi^2=0.
$$

In our case we have a polarized Lagrangian with two different fields, 
and we also have a source term instead of a mass term. Our Lagrangian density would be 
(with a different sign convention compared to \cite{Hawking-Ellis-1973})
$$
L=(\psi_a)_{;i}(\psi_b)_{;j}g^{ij} -\psi_a\delta_b-\psi_b\delta_a,
$$
where $\psi_a$ and $\psi_b$ are independent scalar fields vanishing on $\partial\Omega$. Variation of the action 
gives the Euler Lagrange equations
$$
((\psi_a)_{;i}g^{ij})_{;j}=-\delta_a, \quad ((\psi_b)_{;i}g^{ij})_{;j}=-\delta_b,
$$ 
which are the defining functions for the Green functions $G_a$ and $G_b$.
Therefore the Lagrangian density eventually (``on-shell'') comes out as
$$
L=(G_a)_{;i}(G_b)_{;j}g^{ij}- G_a\delta_b-G_b\delta_a.
$$
Rewriting and including the volume form we have
$$
L\,{\rm vol}^n=\alpha\wedge *\beta-G(a,b)(\delta_a+\delta_b){\rm vol}^n.
$$

The counterpart of the energy-momentum tensor in \cite{Hawking-Ellis-1973}, namely
$$
T_{ij}=\psi_{;i}\psi_{;j}-\frac{1}{2}g_{ij}\big( \psi_{;k}\psi_{;\ell}g^{k\ell}+\frac{m^2}{\hbar^2}\psi^2    \big),
$$
is in our case (up to a factor two, and replacing immediately the fields $\psi_a$, $\psi_b$ by $G_a$, $G_b$)
$$
T_{ij}=(G_a)_{;i} (G_b)_{;j} +(G_a)_{;j} (G_b)_{;i}-g_{ij}g^{k\ell} (G_a)_{;k} (G_b)_{;\ell}.
$$
This is exactly what we started with in Section~\ref{sec:interior version}.



\bibliography{bibliography_gbjorn.bib}

\def\cprime{$'$} \def\cprime{$'$} \def\cprime{$'$} \def\cprime{$'$}
  \def\cprime{$'$} \def\cprime{$'$}
\begin{thebibliography}{1}

\bibitem{Deligne-Freed-1999}
{\sc P.~Deligne and D.~S. Freed}, {\em Classical field theory}, in Quantum
  fields and strings: a course for mathematicians, {V}ol. 1, 2 ({P}rinceton,
  {NJ}, 1996/1997), Amer. Math. Soc., Providence, RI, 1999, pp.~137--225.

\bibitem{Frankel-2012}
{\sc T.~Frankel}, {\em The {G}eometry of {P}hysics}, Cambridge University
  Press, Cambridge, third~ed., 2012.
\newblock An introduction.

\bibitem{Garabedian-1964}
{\sc P.~R. Garabedian}, {\em Partial differential equations}, John Wiley \&
  Sons, Inc., New York-London-Sydney, 1964.

\bibitem{Gustafsson-Sebbar-2012}
{\sc B.~Gustafsson and A.~Sebbar}, {\em Critical points of {G}reen's function
  and geometric function theory}, Indiana Univ. Math. J., 61 (2012),
  pp.~939--1017.

\bibitem{Gustafsson-Teodorescu-Vasiliev-2014}
{\sc B.~Gustafsson, R.~Teoderscu, and A.~Vasil{\cprime}ev}, {\em Classical and
  stochastic {L}aplacian growth}, Advances in Mathematical Fluid Mechanics,
  Birkh\"auser Verlag, Basel, 2014.

\bibitem{Hawking-Ellis-1973}
{\sc S.~W. Hawking and G.~F.~R. Ellis}, {\em The large scale structure of
  space-time}, Cambridge University Press, London-New York, 1973.
\newblock Cambridge Monographs on Mathematical Physics, No. 1.

\bibitem{Landau-Lifshitz-1962}
{\sc L.~D. Landau and E.~M. Lifshitz}, {\em The classical theory of fields},
  Revised second edition. Course of Theoretical Physics, Vol. 2. Translated
  from the Russian by Morton Hamermesh, Pergamon Press,
  Oxford-London-Paris-Frankfurt; Addison-Wesley Publishing Co., Inc., Reading,
  Mass., 1962.

\bibitem{Wiegmann-Zabrodin-2000}
{\sc P.~B. Wiegmann and A.~Zabrodin}, {\em Conformal maps and integrable
  hierarchies}, Comm. Math. Phys., 213 (2000), pp.~523--538.

\end{thebibliography}

\end{document}